\tikzset{>={Latex[width=2mm,length=2mm]},
module/.style = {rectangle, draw, minimum height=0.8cm,
		minimum width=2.5cm, fill=orange!15, text centered,
		font=\ttfamily},
store/.style = {circle, draw, minimum height=1cm,
		fill=orange!15, text centered, font=\ttfamily},
triplearrow/.style={
		draw=black!75,
		color=black!75,
		double distance=3pt,
		postaction={draw=black!75, color=black!75},
		->},
}
\tikzstyle{pzgState} = [
\tikzstyle{pzgStateVar} = [
\tikzstyle{place}=[circle,thick,draw=blue!75,fill=blue!20,minimum size=6mm]
\tikzstyle{contour place}=[place,draw=red!100]
\tikzstyle{red place}=[place,draw=red!75,fill=red!20]
\tikzstyle{gray place}=[place,draw=black!100,fill=black!30]
\tikzstyle{transition}=[ rectangle,thick, fill=black, minimum width=8mm, inner ysep=2pt]
\tikzstyle{red transition}=[ rectangle,thick, fill=red!75, minimum width=8mm, inner ysep=2pt]
\tikzstyle{gray transition}=[ rectangle,thick, draw=black!100,fill=black!30, minimum width=8mm, inner ysep=2pt]
\tikzstyle{blue transition}=[ rectangle,thick, fill=blue!75, minimum width=8mm, inner ysep=2pt]
\tikzstyle{every label}=[black]
\tikzstyle{every node}=[initial text=]
\tikzstyle{location}=[rectangle, rounded corners, minimum size=12pt, draw=black, fill=blue!10, inner sep=2pt]
\tikzstyle{invariant}=[draw=black, dotted, inner sep=1pt]
\tikzstyle{final}=[double]
\newcommand\eg{e.g.,}
\newcommand{\red}[1]{\textcolor{red}{#1}}
\newcommand{\code}[1]{\texttt{#1}}
\lstdefinelanguage{Maude}{keywords={
    , mod, fmod, endm, endfm
    , view, endv
    , is
    , pr , protecting 
    , ex , extending 
    , inc, including
    , sort, sorts, subsort, subsorts
    , var, vars
    , op, ops
    , eq, ceq
    , rl, crl
    , if
    , search
    , check
    , such, that 
    , smt-search
    , rew
    , srew
    , dsrew
    , red, reduce
    , fth, endfth
    , then, else, fi
    },
    alsoletter={-}
}
\definecolor{darkred}{rgb}{0.6,0,0}
\definecolor{coloract}{rgb}{0.50,0.70,0.30}
\definecolor{colorclock}{rgb}{0.4,0.4,1}
\definecolor{colorconst}{rgb}{0.50, 0.20, 0.00}
\definecolor{colordisc}{rgb}{1, 0, 1}
\definecolor{colorloc}{rgb}{0.4,0.4,0.65}
\definecolor{colorparam}{rgb}{1,0.6,0.0}
\long\def\comment#1{}
\newcommand{\To}{\Rightarrow}
\newdimen\w@dth
\def\setw@dth#1#2{\setbox\z@\hbox{\scriptsize $#1$}\w@dth=\wd\z@
\setbox\@ne\hbox{\scriptsize $#2$}\ifnum\w@dth<\wd\@ne \w@dth=\wd\@ne \fi
\advance\w@dth by 1.2em}
\def\t@^#1_#2{\allowbreak\def\n@one{#1}\def\n@two{#2}\mathrel
{\setw@dth{#1}{#2}
\mathop{\hbox to \w@dth{\rightarrowfill}}\limits
\ifx\n@one\empty\else ^{\box\z@}\fi
\ifx\n@two\empty\else _{\box\@ne}\fi}}
\def\t@@^#1{\@ifnextchar_ {\t@^{#1}}{\t@^{#1}_{}}}
\def\t@left^#1_#2{\def\n@one{#1}\def\n@two{#2}\mathrel{\setw@dth{#1}{#2}
\mathop{\hbox to \w@dth{\leftarrowfill}}\limits
\ifx\n@one\empty\else ^{\box\z@}\fi
\ifx\n@two\empty\else _{\box\@ne}\fi}}
\def\t@@left^#1{\@ifnextchar_ {\t@left^{#1}}{\t@left^{#1}_{}}}
\def\two@^#1_#2{\def\n@one{#1}\def\n@two{#2}\mathrel{\setw@dth{#1}{#2}
\mathop{\vcenter{\hbox to \w@dth{\rightarrowfill}\kern-1.7ex
                 \hbox to \w@dth{\rightarrowfill}}}\limits
\ifx\n@one\empty\else ^{\box\z@}\fi
\ifx\n@two\empty\else _{\box\@ne}\fi}}
\def\tw@@^#1{\@ifnextchar_ {\two@^{#1}}{\two@^{#1}_{}}}
\def\tofr@^#1_#2{\def\n@one{#1}\def\n@two{#2}\mathrel{\setw@dth{#1}{#2}
\mathop{\vcenter{\hbox to \w@dth{\rightarrowfill}\kern-1.7ex
                 \hbox to \w@dth{\leftarrowfill}}}\limits
\ifx\n@one\empty\else ^{\box\z@}\fi
\ifx\n@two\empty\else _{\box\@ne}\fi}}
\def\t@fr@^#1{\@ifnextchar_ {\tofr@^{#1}}{\tofr@^{#1}_{}}}
\newdimen\W@dth
\def\setW@dth#1#2{\setbox\z@\hbox{$#1$}\W@dth=\wd\z@
\setbox\@ne\hbox{$#2$}\ifnum\W@dth<\wd\@ne \W@dth=\wd\@ne \fi
\advance\W@dth by 1.2em}
\def\T@^#1_#2{\allowbreak\def\N@one{#1}\def\N@two{#2}\mathrel
{\setW@dth{#1}{#2}
\mathop{\hbox to \W@dth{\rightarrowfill}}\limits
\ifx\N@one\empty\else ^{\box\z@}\fi
\ifx\N@two\empty\else _{\box\@ne}\fi}}
\def\T@@^#1{\@ifnextchar_ {\T@^{#1}}{\T@^{#1}_{}}}
\def\T@left^#1_#2{\def\N@one{#1}\def\N@two{#2}\mathrel{\setW@dth{#1}{#2}
\mathop{\hbox to \W@dth{\leftarrowfill}}\limits
\ifx\N@one\empty\else ^{\box\z@}\fi
\ifx\N@two\empty\else _{\box\@ne}\fi}}
\def\T@@left^#1{\@ifnextchar_ {\T@left^{#1}}{\T@left^{#1}_{}}}
\def\Tofr@^#1_#2{\def\N@one{#1}\def\N@two{#2}\mathrel{\setW@dth{#1}{#2}
\mathop{\vcenter{\hbox to \W@dth{\rightarrowfill}\kern-1.7ex
                 \hbox to \W@dth{\leftarrowfill}}}\limits
\ifx\N@one\empty\else ^{\box\z@}\fi
\ifx\N@two\empty\else _{\box\@ne}\fi}}
\def\T@fr@^#1{\@ifnextchar_ {\Tofr@^{#1}}{\Tofr@^{#1}_{}}}
\def\Two@^#1_#2{\def\N@one{#1}\def\N@two{#2}\mathrel{\setW@dth{#1}{#2}
\mathop{\vcenter{\hbox to \W@dth{\rightarrowfill}\kern-1.7ex
                 \hbox to \W@dth{\rightarrowfill}}}\limits
\ifx\N@one\empty\else ^{\box\z@}\fi
\ifx\N@two\empty\else _{\box\@ne}\fi}}
\def\Tw@@^#1{\@ifnextchar_ {\Two@^{#1}}{\Two@^{#1}_{}}}
\def\to{\@ifnextchar^ {\t@@}{\t@@^{}}}
\def\from{\@ifnextchar^ {\t@@left}{\t@@left^{}}}
\def\two{\@ifnextchar^ {\tw@@}{\tw@@^{}}}
\def\tofro{\@ifnextchar^ {\t@fr@}{\t@fr@^{}}}
\def\To{\@ifnextchar^ {\T@@}{\T@@^{}}}
\def\From{\@ifnextchar^ {\T@@left}{\T@@left^{}}}
\def\Two{\@ifnextchar^ {\Tw@@}{\Tw@@^{}}}
\def\Tofro{\@ifnextchar^ {\T@fr@}{\T@fr@^{}}}
\newcommand\tensor\otimes
\newcounter{marginalnote}
\newcommand{\highlight}[1]{\red{\texttt{#1}}}
 \newcommand{\nats}{\mathbb{N}}
\newcommand{\oclos}[2]{\stackrel{#1}{#2}}
\newcommand{\SASYNC}{\square}
\newcommand{\SPAR}{{||}}
\newcommand{\SSYNC}{{s}}
\newcommand{\cU}{\mathcal{U}}
\newcommand{{\REL}}{\rightarrow}
\newcommand{{\AREL}}{\oclos{\SASYNC}{\REL}}
\newcommand{{\PREL}}{\oclos{\SPAR}{\REL}}
\newcommand{{\SREL}}{\oclos{\SSYNC}{\REL}}
\newcommand{\metaorel}[3]{{#1} \mathop{#2} {#3}}
\newcommand{\opair}[2]{\left\langle {#1} ; {#2} \right\rangle} 
\newcommand{\opset}[1]{\mathcal{P}\left({#1}\right)} 
\newcommand{\ocard}[1]{\mathrm{card}\left({#1}\right)} 
\newcommand{\orel}[2]{{#1} \mathop{\REL} {#2}}
\newcommand{\onrel}[2]{\metaorel{#1}{\not\REL}{#2}}
\newcommand{\ofunc}[3]{{#1} : {#2} \rightarrow {#3}}
\newcommand{\oobj}[2]{\left \langle {#1} : {#2} \right \rangle}
\begin{document}
\mainmatter
\title{Unified Opinion Dynamic Modeling as Concurrent Set Relations in Rewriting Logic
\thanks{The work of Olarte, Ram\'irez, Rocha and Valencia was partially supported by the Minciencias (Ministerio de Ciencia
Tecnolog\'a e Innovaci\'on, Colombia) project PROMUEVA (BPIN 2021000100160).}}
\titlerunning{Opinion Dynamic Modeling in Rewriting Logic}
\author{Carlos Olarte\inst{1} \and Carlos Ramírez\inst{2} \and Camilo Rocha\inst{2} \and Frank Valencia\inst{3,2}}
\authorrunning{Olarte, Ram\'irez, Rocha and Valencia.}

\institute{
  LIPN, CNRS UMR 7030, Université Sorbonne Paris Nord, Villetaneuse, France \\
  \and
  Department of Electronics and Computer Science, Pontificia Universidad Javeriana, Cali, Colombia \\
  \and
  CNRS-LIX, École Polytechnique de Paris, France
}

\maketitle

\begin{abstract}
  Social media platforms have played a key role in weaponizing the
  polarization of social, political, and democratic processes. This
  is, mainly, because they are a medium for opinion formation.
Opinion dynamic models are a tool for understanding the role of
  specific social factors on the acceptance/rejection of opinions
  because they can be used to analyze certain assumptions on human
  behaviors.
This work presents a framework that uses concurrent set relations as
  the formal basis to specify, simulate, and analyze social
  interaction systems with dynamic opinion models. Standard models for
  social learning are obtained as particular instances of the proposed
  framework.
It has been implemented in the Maude system as a fully executable
rewrite theory that can be used to better
  understand how opinions of a system of agents can be shaped.
This paper also reports an initial exploration in Maude on the use
  of reachability analysis, probabilistic simulation, and statistical
  model checking of important properties related to opinion dynamic
  models.

  \keywords{Concurrent set relations \and opinion dynamic models \and
    social interaction systems \and belief revision \and rewriting
    logic \and formal verification}
\end{abstract}
 
\section{Introduction}
\label{sec.intro}

Social media platforms have played a key role in the polarization of
social, political, and democratic processes.
Social uprisings in the Middle East, Asia, and Central and South
America have led to sudden changes in the structure and nature of
society during this past
decade~\cite{lynch-arabspring-2015,asiafound-techconflict-2020,csis-milktea-2021,zolov-chile-2023,wikipedia-colombiauprising-2024,Kirby17}.
Polarization across the globe has paved the way to the divergence of
political attitudes away from the center, towards ideological
extremes, sometimes resulting in fractured institutions, erratic
policy making, incipient political dialog, and the resurgence of old
regimes~\cite{Garrett2017,iversen-masspolarization-2015,neverov-socialpubpol-2017,lynch-arabspring-2015,diplomat-southasia-2023,gupta-polmark-2023}.
Democracy, viewed as a system of power controlled by the people, has
been made vulnerable by severe polarization as opposing sides are seen
as adversaries that compete against an enemy needing to be
vanquished. As a result, popular election campaigns --including
presidential ones-- have compromised the basic principles of
democratic election in some
countries~\cite{beaufort-polarization-2018,suresh-tracking-2023,sarma-socialmediaelections-2023,ballard-polarizcongress-2023}.
All these scenarios have a common factor: social media interaction as
a medium for \emph{opinion formation} fueling polarization.

Social learning and opinion dynamic models have been developed to
understand the role of specific social factors on the
acceptance/rejection of opinions, such as the ones communicated via
social media (see,
e.g.,~\cite{Golup2017,alvim:hal-03872692,xia-opdynmod-2023,das-odmsocmedia-2014}). They
are often used to validate how certain assumptions on human behaviors
can explain alternative scenarios, such as opinion consensus, polarization 
and fragmentation.
In their micro-level approach, the one followed in the present work,
users are considered as agents that can share opinions on a given
topic. They update their opinion by interacting with a selected group
of users that have some influence on them (e.g., influencers, their family and
friends).
These dynamics take place at discrete time steps at which (some)
agents update their opinion. For instance, an opinion model can
deterministically update the opinion of all agents in such a
time-step, while another one can non-deterministically update the
opinion of a single agent.
Depending on the model of choice, which usually defines its own update
function for the individual agents, phenomena under different
assumptions can be observed. The ultimate goal is to understand how
the opinions of the agents, as a social system, are shaped after a
certain number of steps.

This work proposes a framework that uses concurrent set relations as
the formal basis to specify, simulate, and analyze social interaction
systems with dynamic opinion models.
The framework uses \textit{influence graphs} to specify the structure of agent
interactions in the social system under study: vertices represent
agents and a directed weighted edge from $a$ to $b$ represents the
weighted influence of agent's $a$ opinion over the opinion of agent
$b$.
In the sense of set relations in~\cite{DBLP:journals/tcs/RochaMD11},
the framework comprises two main mechanisms that are combined via
closures for specifying opinion dynamics over the graphs: namely, an
atomic set relation and a strategy. The \textit{atomic set relation}
updates the opinion of a single vertex w.r.t. a set of edges (and the
corresponding vertices) incident to it.  The \textit{strategy} selects
the edges that will be used to update in parallel (i.e.,
synchronously) the opinion associated to the vertices with edges
incident to it in the given set.
As a consequence, dynamic opinion models can be formalized as a
concurrent set relation system, with parametric update function, using
the composition of an atomic relation and a strategy via closures.
An important observation is that the determinism or non-determinism
inherent to a given opinion dynamic model is exactly captured by the
deterministic or non-deterministic nature of the corresponding
concurrent set relation.

Standard models for social learning are obtained as particular
instances of the proposed framework. The classical DeGroot opinion
model~\cite{degroot-opinionmodel-1974} is obtained as the synchronous
closure under the maximal redices strategy of a given atomic set
relation. In a similar fashion, gossip-based models that use pairwise
interactions to represent the opinion formation process (see,
e.g.,~\cite{fagnani-gossipopmodel-2007}) are obtained via the
asynchronous closure where the strategy selects single edges for the
given atomic set relation. Other opinion models can be obtained via
the synchronous closure of an atomic set relation, as midpoints
between De Groot and gossip-based models.

The proposed framework has been implemented in the Maude
system~\cite{clavel-maudebook-2007}. It is a rewriting logic theory
that exploits the reflective capabilities of rewriting logic and that
can be particularized to the opinion model of interest.
A state is an object-like configuration representing the structure of
the system and its opinion values. An object is either an agent $u$
with its opinion $o_u$, specified as $\oobj{u}{o_u}$, or the influence
of agent $u$ over agent $v$ with weight $i_{uv}$, specified as
$\oobj{(u, v)}{i_{uv}}$. The update function $\mu$ of each specific model is
to be defined equationally.
The implementation of both the atomic set relation and the strategy is
inspired by the ideas in~\cite{DBLP:journals/scp/RochaM14}.
The atomic set relation is axiomatized as a (non-executable) rewrite
rule that takes as input an agent $\oobj{u}{o_u}$ and a set of edges
$A \subseteq E$ in the current state. For a given state, it updates
the opinion $o_u$ to a new opinion $o_u'$ using $\mu$, and the opinion
and influence of agents adjacent to it w.r.t. $A$. As a result, each
atomic step rewrites a single object $\oobj{u}{o_u}$ to its updated
version $\oobj{u}{o_u'}$.
The metalevel is used to apply the atomic rewrite rule over the agents
in a state according to the edges selected by the given strategy: only
agents appearing as targets of the directed edges have their opinion
updated. This strategy is defined equationally by the user and
computes a collection of subsets of $E$: a parallel rewrite step under
the maximal redices strategy is performed for each subset $A$ of
edges. Since the atomic rewrite relation is deterministic, the
strategy is the only source of non-determinism in the system and a
concurrent step is made for each identified subset $A$.

The implementation of the proposed framework results in a fully
executable top-most object-like rewrite theory in Maude that can be
used to better understand how opinions of a system of agents are
shaped --and to ultimately understand polarization--- using formal
methods techniques, such as reachability analysis and temporal model
checking.

This work is part of a broader effort to make available computational
ideas and approaches for analyzing phenomena in social networks, such
as polarization, consensus, and fragmentation. They include
concurrency models, modal and probabilistic logics, and formal methods
frameworks, techniques, and tools. In this context, the work presented
here is a first step towards the use of rewriting logic for such
purposes. As it is explained in the sections that follow, one major
problem a opinion dynamic model may face is that of state
explosion. An initial exploration on the use of probabilistic
simulation and statistical analysis is reported in this work. However,
the extension of the proposed framework to a fully probabilistic
setting, in which --e.g.-- the strategy selects the set of edges
according to a probability distribution function, falls outside the
scope of this work.  It needs to be further explored as future work as
it may open the door to statistical model checking of novel properties
using a new breed of measures and thus pave the way to the analysis of
quantitative properties beyond the reach of techniques currently
available for opinion dynamic models.

\paragraph{Organization.} After recalling the notion of set relations in
\Cref{sec.rels}, \Cref{sec.models} shows how different models for
social learning can be seen as particular instances (atomic set
relation and strategy) of this framework. The implementation in Maude
is described in \Cref{sec.rl}, while different analyses performed on
the proposed rewrite theory are introduced in
\Cref{sec.exp}. \Cref{sec.concl} concludes the paper.
The full Maude specification supporting the set relations framework
is available at~\cite{tool}, as companion
tool to the paper.


\section{Overview of Rewriting Logic and Maude}
\label{sec.prem}

A \emph{rewrite theory} \cite{meseguer-rltcs-1992} is a tuple $\mathcal{R} =
(\Sigma, E, L, R)$ such that: $(\Sigma, E)$ is an equational theory where
$\Sigma$ is a signature that declares sorts, subsorts, and function symbols;
$E$ is a set of (conditional) equations of the form $t=t' \mbox{ \textbf{if} }
\psi$, where $t$ and $t'$ are terms of the same sort, and $\psi$ is a
conjunction of equations; $L$ is a set of \emph{labels}; and $R$ is a set of
labeled (conditional) rewrite rules of the form $l : q \longrightarrow r \mbox{
\textbf{if} } \psi$, where $l \in L$ is a label, $q$ and $r$ are terms of the
same sort, and $\psi$ is a conjunction of equations. Condition $\psi$ in
equations and rewrite rules can be more general than conjunction of equations,
but this extra expressiveness is not needed in this paper. 
    
$T_{\Sigma, s}$ denotes the set of ground terms of sort $s$, and
$T_{\Sigma}(X)_s$ denotes the set of terms of sort $s$ over a set of  sorted
variables $X$. $T_{\Sigma}(X)$ and $T_{\Sigma}$ denote all terms and ground
terms, respectively. A substitution $\sigma : X \rightarrow T_{\Sigma}(X)$ maps
each variable to a term of the same sort, and $t \sigma$ denotes the term
obtained by simultaneously replacing each variable $x$ in a term $t$ with
$\sigma(x)$. 

A \emph{one-step rewrite} $t \longrightarrow_{\mathcal{R}} t'$ holds if there
is a rule $l : q \longrightarrow r \mbox{ \textbf{if} } \psi$, a subterm $u$ of
$t$, and a substitution $\sigma$ such that $u = q\sigma$ (modulo equations),
$t'$ is the term obtained from $t$ by replacing $u$ with $r\sigma$, and
$v\sigma = v'\sigma$ holds for each $v = v'$ in $\psi$. The
reflexive-transitive closure of $\longrightarrow_{\mathcal{R}}$ is denoted as
$\longrightarrow_{\mathcal{R}}^\ast$.

A rewrite theory $\mathcal{R}$ is called \emph{topmost} iff there is a sort
$\mathit{State}$ at the top of one of the connected components of the subsort
partial order such that for each rule $l : q \longrightarrow r \mbox{
\textbf{if} } \psi$, both $q$ and $r$ have the top sort $\mathit{State}$, and
no operator has sort $\mathit{State}$ or any of its subsorts as an argument
sort.

Maude~\cite{clavel-maudebook-2007} is a language and tool
supporting the specification and analysis of  rewrite theories. A Maude module
(\texttt{\textcolor{blue}{mod}} \textit{M} \texttt{\textcolor{blue}{is}} ...
\texttt{\textcolor{blue}{endm}}) specifies a rewrite theory $\mathcal{R}$.
Sorts and subsort relations are declared by the keywords
\texttt{\textcolor{blue}{sort}}  and \texttt{\textcolor{blue}{subsort}}; 
function symbols, or \emph{operators}, are introduced with the
\texttt{\textcolor{blue}{op}} keyword: \texttt{\textcolor{blue}{op}} $f$
\texttt{:} $s_1$ ... $s_n$ \texttt{->} $s$, where $s_1$, \ldots,  $s_n$ are the
sorts of its arguments, and $s$ is its (value) sort. Operators can have
user-definable syntax, with underbars `\verb@_@' marking each of the argument
positions (\eg{} \verb@_+_@). Some operators can have equational
attributes, such as \texttt{\textcolor{blue}{assoc}},
\texttt{\textcolor{blue}{comm}}, and \texttt{\textcolor{blue}{id:}}$\;t$,
stating that the operator is, respectively, associative,  commutative,  and/or
has identity element $t$. 
Equations are specified with the syntax 
 \texttt{\textcolor{blue}{eq}} $t$ \texttt{=} $t'$  or
    \texttt{\textcolor{blue}{ceq}} $t$ \texttt{=} $t'$
      \texttt{\textcolor{blue}{if}} $\psi$;
      and 
  rewrite rules as \texttt{\textcolor{blue}{rl}}
    \texttt{[}$l$\texttt{]\,:} $u$ \texttt{\,=>\,} $v$ or
    \texttt{\textcolor{blue}{crl}} \texttt{[}$l$\texttt{]\,:} $u$
      \texttt{\,=>\,} $t'$ \texttt{\textcolor{blue}{if}}
        $\psi$.
The mathematical variables
  in such statements are declared with the keywords
  \texttt{\textcolor{blue}{var}} and \texttt{\textcolor{blue}{vars}}.

 Maude provides a large set of analysis methods,  including
computing the normal form of a term $t$ 
(command \lstinline[mathescape]{red $t$}), simulation by rewriting 
(\lstinline[mathescape]{rew $t$}),
reachability analysis (\lstinline[mathescape]{search $t$ =>* $t'$ such that $\psi$}), 
and rewriting
according to a given rewrite strategy (\lstinline[mathescape]{srew $t$ using $str$}). 
Basic such rewrite strategies include $r\mathtt{[}\sigma\mathtt{]}$
(apply  rule with label $r$ once with the optional ground substitution $\sigma$),
\code{idle} (identity), \code{fail}
(empty set), and \code{match $P$ s.t.\ $C$}, which  checks whether the current
term matches the pattern $P$ subject to the constraint $C$. Compound strategies
can be defined using concatenation ($\alpha\,;\,\beta$), disjunction ($\alpha\,
|\, \beta$), iteration ($\alpha \,\mathtt{*}$), $\alpha \code{ or-else } \beta$
(execute $\beta$ if $\alpha$ fails), etc. 

The Unified Maude model-checking tool \cite{DBLP:journals/jlap/RubioMPV21}
(\texttt{umaudemc}) allows for the use of different model checkers to analyze
Maude specifications. Besides being an interface for the standard LTL model
checker of Maude, it also offers the possibility of interfacing external CTL
and probabilistic model checkers. For the purpose of this paper, the command
\code{scheck} \cite{DBLP:conf/fm/RubioMPV23} is used to assign probabilities to
the transition system generated by an initial term $t$, and perform statistical
model checking to estimate quantitive expressions written in the QuaTEx
language. Hence, it is possible to compute, e.g., the expected value of the
number of communication or interactions needed to reach a consensus in a
network.

\paragraph{Meta-programming.} Maude supports \emph{meta-programming}, where a
Maude module $M$ (resp., a term $t$) can be (meta-)represented as a Maude
\emph{term} $\overline{M}$ of sort \code{Module} (resp.\  as a Maude term
$\overline{t}$ of sort \code{Term}) in Maude's \code{META-LEVEL} module. Maude
provides built-in functions such as  \code{metaRewrite}, and
\code{metaSearch}, which are the ``meta-level'' functions corresponding to
``user-level'' commands to perform 
rewriting and search, respectively.

\section{Set Relations}
\label{sec.rels}

This section introduces set relations and their notation, as used in
this paper. It defines the asynchronous, parallel, and synchronous set
relations as closures of an atomic set relation. This section is
based, mainly, on~\cite{DBLP:journals/tcs/RochaMD11}.

Let $\cU$ be a set whose elements are denoted $A, B, \ldots$ and let
$\REL$ be a binary relation on $\cU$.
An element $A$ of $\cU$ is called a $\REL$-\textit{redex} iff there
exists $B \in \cU$ such that the \textit{pair} $\opair{A}{B}
\mathop{\in} \REL$. The expressions $\orel{A}{B}$ and $\onrel{A}{B}$
denote $\opair{A}{B} \mathop{\in} \REL$ and $\opair{A}{B}
\mathop{\not\in} \REL$, respectively.
The \textit{identity} and \textit{reflexive-transitive} closures of
$\REL$ are defined as usual and denoted $\stackrel{0}{\REL}$ and
$\stackrel{*}{\REL}$, respectively.

It is assumed that $\cU$ is the family of all \textit{nonempty} finite
subsets of an abstract and possibly infinite set $T$ whose members
are called $\textit{elements}$ (i.e., $\cU \subseteq \opset{T}$,
$\emptyset\not\in\cU$, and if $A \in \cU$, then $\ocard{A} \in \nats$
). Therefore, $\REL$ is a binary relation on finite subsets of
elements in $T$.
When it is clear from the context, curly brackets are omitted from set
notation; e.g., $\orel{a,b}{b}$ denotes $\orel{\{a,b\}}{\{b}\}$.
Because this convention, the symbol `,' is overloaded to denote set
union. For example, if $A$ denotes the set $\{a,b\}$, $B$ the set
$\{c,d\}$, and $D$ the set $\{d,e\}$, the expression $\orel{A,B}{B,D}$
denotes the pair $\orel{a,b,c,d}{c,d,e}$.

Given a set of elements, in the asynchronous set relation exactly one
redex is selected to be updated.

\begin{definition}[Asynchronous Set Relation]\label{def.rels.async}
  The \emph{asynchronous relation} $\AREL$ is defined as the
  asynchronous closure of $\REL$, i.e., the set of pairs $\opair{A}{B}
  \in \cU \times \cU$ such that $\metaorel{A}{\AREL}{B}$ iff there
  exists a $\REL$-redex $A' \subseteq A$ and an element $B' \in \cU$
  such that $\orel{A'}{B'}$ and $B = (A \setminus A') \cup B'$.
\end{definition}

In the parallel set relation, a non-empty collection of redices is
identified to be updated in parallel (i.e., without interleaving).

\begin{definition}[Parallel Set Relation]\label{def.rels.parallel}
  The \emph{parallel relation} $\PREL$ is defined as the parallel
  closure of $\REL$, i.e., the set of pairs $\opair{A}{B} \in \cU
  \times \cU$ such that $\metaorel{A}{\PREL}{B}$ iff there exist
  $\REL$-redices $A_1,\ldots,A_n \subseteq A$ (nonempty) pairwise
  disjoint and elements $B_1,\ldots,B_n$ in $\cU$ such that
  $\orel{A_i}{B_i}$, for $1 \leq i \leq n$, and $B = \left(A \setminus
  \bigcup_{1 \leq i \leq n}A_i \right) \cup \left(\bigcup_{1 \leq i
    \leq n} B_i\right)$.
\end{definition}

The synchronous set relation $\SREL$ applies as many atomic reductions
as possible, in parallel. However, in contrast to the previous two
closures, the redices are selected with the help of a strategy $s$,
namely, a function that identifies a non-empty subset of redices.  As
a consequence, the synchronous set relation is a subset of the
parallel set relation.
It is important to note that the notion of strategy used for defining
the synchronous closure of the atomic set relation is different to the
one introduced in~\Cref{sec.intro} for the framework; the name used in
this section is kept from~\cite{DBLP:journals/tcs/RochaMD11}.

\begin{definition}[$\REL$-strategy]\label{def.rels.strat}
  A \emph{$\REL$-strategy} is a function $s$ that maps any element $A
  \in \cU$ into a set $s(A) \subseteq \opset{\REL}$ such that
  if $s(A) = \{\opair{A_1}{B_1}, \ldots \opair{A_n}{B_n}\}$,
  then $A_i \subseteq A$ and $\orel{A_i}{B_i}$, for $1 \leq i \leq n$,
  and $A_1, \ldots, A_n$ are pairwise disjoint.
\end{definition}

\begin{definition}[Synchronous Relation]\label{def.rels.sync}
  Let $s$ be a $\REL$-strategy. The \emph{synchronous relation}
  $\SREL$ is defined as the synchronous closure of $\REL$ w.r.t. $s$,
  i.e., the set of pairs $\opair{A}{B} \in \cU \times \cU$ such that
  $\metaorel{A}{\SREL}{B}$ iff $B = \left(A \setminus \bigcup_{1 \leq
    i \leq n}A_i \right) \cup \left(\bigcup_{1 \leq i \leq n}
  B_i\right)$ where $s(A) = \{\opair{A_1}{B_1}, \ldots
  \opair{A_n}{B_n}\}$.
\end{definition}

This section is concluded with an example that illustrates the notions
introduced so far.
\subsubsection{Vaccine Example.} 
Consider the directed weighted graph $G = (V, E, i)$ in
Figure~\ref{fig.rels.vaccines}.  It represents a social system with
six agents $V = \{a,b,c,d,e,f\}$ and twelve opinion influences. The
label $i(u,v)$ associated to each edge $(u, v)$ from agent $u$ to
agent $v$ denotes the opinion influence $i_{uv} = i(u,v)$ of agent $u$
over the opinion of agent $v$ (about a given topic): these values are
in the real interval $[0,1]$ (i.e., $\ofunc{i}{E}{[0,1]}$); the higher
the value, the stronger the influence. In this example, the influence
of $f$ over $a$ is the strongest possible. Notice that agents may also
have \emph{self-influence}, representing agents whose opinion need not
be completely influenced by the opinion of the others.

The initial opinions (or beliefs) of the agents are depicted within
the box below each node. They are specified by a function
$\ofunc{o}{V}{[0,1]}$, which is assumed to represent the opinion value
$o_u = o(u)$ of each agent $u$ on the given topic. The greater the
value, the stronger (weaker) the agreement (disagreement) with the
proposition, and $0$ represents total disagreement.  In this example
such a proposition is \emph{vaccines are safe}. Intuitively, the
agents $a$, $b$, and $c$ are in strong disagreement with vaccines
being safe (the anti-vaxxers) and the rest are in strong agreement
(the pro-vaxxers).

Notice that although $a$ is the most extreme anti-vaxxer, the most
extreme pro-vaxxer $f$ has a strong influence over $a$. Hence, it is
expected that the evolution of $a$'s opinion will be highly influenced
by the opinion of $f$. In general, an agent's opinion evolution takes
into account a subset of its influences, as will be explained shortly.

\begin{figure}[htbp]
  \centering
  \includegraphics[scale=0.9]{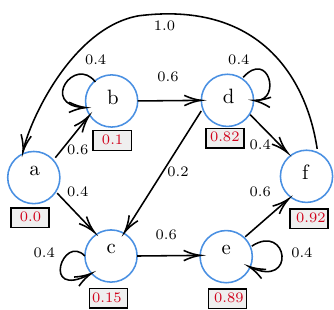}
  \caption{Graph representing opinion and influence interaction in a
    social system. Initial opinions are given within the box below
    each node. The labels on each edge $(u,v)$ represent the influence
    value of agent $u$ over agent $v$.}
  \label{fig.rels.vaccines}
\end{figure}

Recall the object-like notation in~\Cref{sec.intro}.  The set of
elements $T$ is made of pairs of the form $\oobj{u}{r}$ or $\oobj{(u,
  v)}{r}$, with $u,v \in V$, $(u,v) \in E$, and $r \in [0,1]$.
The graph in Figure~\ref{fig.rels.vaccines} can be
specified as the set of elements $\Gamma$:
{\small\begin{align*} \Gamma = \{ & \oobj{a}{0.0}, \oobj{b}{0.1},
  \oobj{c}{0.15}, \oobj{d}{0.82}, \oobj{e}{0.89}, \oobj{f}{0.92},\\ &
  \oobj{(a,b)}{0.6}, \oobj{(a,c)}{0.4}, \oobj{(b,d)}{0.6},
  \oobj{(c,e)}{0.6}, \oobj{(d,c)}{0.2}, \oobj{(d,f)}{0.4}, \\ &
  \oobj{(e,f)}{0.6}, \oobj{(f,a)}{1.0}, \oobj{(b,b)}{0.4},
  \oobj{(c,c)}{0.4}, \oobj{(d,d)}{0.4}, \oobj{(e,e)}{0.4} \}.
\end{align*}}

The atomic relation $\REL_A$ is defined over elements representing
agents and is parametric on a set $A$ of elements representing edges
in $\Gamma$.  In this example, it follows the pattern
\begin{equation}\label{eq.rels.atomic}
  \oobj{u}{o_u} \REL_A \oobj{u}{\sum_{\oobj{(x,u)}{i_{xu}} \in A} o_x \cdot \frac{i_{xu}}{\sum_{\oobj{(y,u)}{i_{yu}}\in A} i_{yu}}},
\end{equation}
where the summation in the denominator is assumed to be non-zero.
The opinion $o_u$ of an agent $u$ w.r.t. to $A$ is updated to be the
weighted average of the opinion values of those agents adjacent to $u$
and whose influence is present in $A$.
For instance, let $A = \{\oobj{(a,b)}{0.6}, \oobj{(b,b)}{0.4},
\oobj{(c,e)}{0.6}\}$. Then, the atomic set relation $\REL_A$ has the
following two pairs:
\begin{align*}
  \oobj{b}{0.1} \REL_A \oobj{b}{0.04} && \oobj{e}{0.89} \REL_A \oobj{e}{0.15}.
\end{align*}
In the case of agent $b$, its opinion is updated to $0.04 = 0.0 \cdot
\frac{0.6}{1.0} + 0.1\cdot\frac{0.4}{1.0}$ because, w.r.t. $A$, it is
influenced both by itself and by agent $a$, whose opinion value is
$0.0$ and influence over $b$ is $0.6$.
In the case of agent $e$, its opinion is influenced only by agent
$c$. The value is updated to $0.15 = 0.15 \cdot \frac{0.6}{0.6}$.  It
can be said that, w.r.t. $A$, agent $e$ acts like a \emph{puppet} whose own
opinion is not taken into account when it is updated.

The asynchronous closure of $\REL_A$ has exactly two pairs, one for
each redex determined by $\REL_A$ (i.e., one for agent $b$ and another
for agent $e$):
\begin{align*}
 & \Gamma \AREL_A \left (\Gamma \setminus \{ \oobj{b}{0.1} \} \right) \cup \{\oobj{b}{0.04}\} &\quad 
 & \Gamma \AREL_A \left (\Gamma \setminus \{ \oobj{e}{0.89} \} \right) \cup \{\oobj{e}{0.15}\}.
\end{align*}
The parallel closure $\PREL$ has three pairs: one in which the
opinions of both $b$ and $e$ are updated, in addition to the same two
pairs present in the asynchronous closure:
\[
    \begin{array}{lll}
  \Gamma \PREL_A \left (\Gamma \setminus \{ \oobj{b}{0.1} \} \right) \cup \{\oobj{b}{0.04}\} &\qquad\quad&
  \Gamma \PREL_A \left (\Gamma \setminus \{ \oobj{e}{0.89} \} \right) \cup \{\oobj{e}{0.15}\} \\
  \multicolumn{3}{l}{\Gamma \PREL_A \left (\Gamma \setminus \{ \oobj{b}{0.1}, \oobj{e}{0.89} \} \right) \cup \{\oobj{b}{0.04}, \oobj{e}{0.15}\}.}
    \end{array}
\]
Finally, to illustrate the synchronous closure of $\REL$,
let $s = A$ be the strategy. That is, all redices in $\REL_A$
are identified to be reduced. Therefore, this relation has the only
pair in which the opinions of both $b$ and $e$ are updated in
parallel:
\begin{align*}
 & \Gamma \SREL_A \left (\Gamma \setminus \{ \oobj{b}{0.1}, \oobj{e}{0.89} \} \right) \cup \{\oobj{b}{0.04}, \oobj{e}{0.15}\}.
\end{align*}
 
\section{Opinion Dynamic Models}
\label{sec.models}

This section shows how opinion dynamic models can be specified as set
relations (see ~\Cref{sec.rels}). In particular, a gossip-based and
the classical De Groot opinion models are introduced, as well as a
generalization of De Groot and gossip (under some conditions), here
called the \emph{hybrid opinion model}.

The three above-mentioned models are defined, as stated in
\Cref{sec.rels}, over a directed weighted graph $G = (V, E, i)$
representing a social system, with agents $V$, directed opinion
influences $E \subseteq V \times V$, and influence values
$\ofunc{i}{E}{[0,1]}$. A given topic (i.e., proposition) is fixed.
The weight $i_{uv}=i(u,v)$ associated to each edge $(u, v)\in E$ from
agent $u$ to agent $v$ denotes the opinion influence value of agent
$u$ over the opinion value of agent $v$ on the given topic.
The  opinion value $o_u = o(u) \in [0,1]$ associated to each agent $u
\in V$ in the given topic is assumed to be known by all agents in the
system.
As in \Cref{sec.rels}, the higher the value of a opinion (resp. influence),
the stronger the agreement (resp. influence).

The set of elements $T$ in the set relations framework
(see~\Cref{sec.rels}) is made of pairs of the form $\oobj{u}{r}$ or
$\oobj{(u, v)}{r}$, with $u,v \in V$, $(u,v) \in E$, and $r \in
[0,1]$.
A $G$-\textit{configuration} (or \textit{configuration}) is the set of
elements in $T$ that exactly represent the structure of $G$, and the
values of opinions and interactions. Therefore, in the rest of this
section, it is assumed that any configuration $\Gamma$ can be
partitioned in two sets $\Gamma_o$ and $\Gamma_i$, respectively
containing elements of the form $\oobj{u}{o_u}$ specifying opinions
and $\oobj{(u, v)}{i_{uv}}$ specifying influences.

A model specifies how opinions (associated to agents) can be
updated.
Each model definition comprises three pieces; namely, an atomic
relation, a strategy, and an update function for opinions.
Therefore, a model specifies how a $G$-configuration $\Gamma =
\Gamma_o \cup \Gamma_i$ can change to another $G$-configuration
$\Gamma' = \Gamma_{o'} \cup \Gamma_i$, where only opinions are
updated.
It is important to note that the notion of strategy introduced
in this section generalizes the notion of strategy introduced
in~\Cref{sec.rels}, as will be explained later.

The atomic relation is defined in \Cref{sec.models.atomic} for the
three models. Each model is introduced by identifying a specific
strategy and a specific update function in subsequent sections.

\subsection{The Atomic Relation}
\label{sec.models.atomic}

The atomic relation $\REL_A$ is parametric on a subset $A \subseteq
\Gamma_i$ and defines how the opinion of a single agent may evolve.
The set of influences $A$ directly identifies the influences (and
indirectly the opinions) to update the opinion of each agent in the
configuration $\Gamma$ (i.e., in $\Gamma_o$).
For each one of the three models, the
atomic relation $\REL_A$ follows the pattern:
\begin{equation}\label{eq.models.atomic}
  \oobj{u}{o_u} \REL_A \oobj{u}{\mu(\Gamma, A, u)},
\end{equation}
where $\mu$ is the update function specific to each model.  This
function takes as input a $G$-configuration (e.g., $\Gamma$), a subset
of its influences (e.g., $A$), and the agent whose opinion is to be
updated (e.g., $u$), and outputs the new opinion for agent $u$
w.r.t. $\Gamma$ and $A$ in the corresponding model.

\subsection{Gossip-based Models}
\label{sec.models.gossip}

In a gossip-based model, single peer-to-peer interactions are used to
update the opinion of a single user at each time-step.
In general, a strategy in the proposed framework identifies a
collection of subsets of interactions in $\Gamma_i$. In particular,
the strategy $\rho_\text{gossip}$ maps a $G$-configuration to the
collection of singletons made from the influences in $\Gamma_i$:
\begin{align*}
  & \rho_\text{gossip}(\Gamma)  = \{ \{ x \} \mid x \in \Gamma_i \}.
\end{align*}
This means that, at each time-step, the opinion value of agent $v$ can
be updated w.r.t. the opinion value of agent $u$ for each singleton
$\{\oobj{(u,v)}{i_{uv}}\}$ computed by the strategy
$\rho_\text{gossip}(\Gamma)$.

The update function $\mu_\text{gossip}$ is defined for any $u \in V$
and $A=\{\oobj{(v, u)}{i_{vu}}\} \in \rho_\text{gossip}(\Gamma)$ as:
\begin{align*}
  & \mu_\text{gossip}(\Gamma, A, u)  = o_u + (o_v-o_u)\cdot i_{vu}.
\end{align*}

Each singleton $A \in \rho_\text{gossip}(\Gamma)$ determines an atomic
relation that updates exactly one agent's opinion in the given
configuration.
Recall, from~\Cref{sec.models.atomic}, that each pair in the atomic
set relation $\REL_A$ has the form:
\begin{align*}
  \oobj{u}{o_u} \REL_A \oobj{u}{\mu_\text{gossip}(\Gamma, A, u)}.
\end{align*}
Hence, in this model, the opinion of an agent $u$ is updated by
identifying an edge from other agent $v$ with influence $i_{vu}$ over
$u$ and by adding to its current opinion $o_u$ the weighted difference
of opinion $(o_v-o_u)\cdot i_{vu}$ of $v$ over $u$.

A gossip-based model is identified as a binary set relation on
$G$-configurations in terms of the asynchronous closure of $\REL_A$,
for each singleton $A\in \rho_\text{gossip}(\Gamma)$.

\begin{definition}\label{def.models.gossip}
The $\REL_\text{gossip}$ set relation is the set of pairs
$\opair{\Gamma}{\Gamma'}$ of $G$-configurations such that:
\begin{align*}
  \Gamma \REL_\text{gossip} \Gamma' \quad\text{iff} \quad (\exists A
  \in \rho_\text{gossip}(\Gamma)) \; \Gamma \AREL_A \Gamma'.
\end{align*}
\end{definition}

From the viewpoint of concurrency, the gossip-based opinion dynamic
model captured by $\REL_\text{gossip}$ is non-deterministic in the
sense that at each state (i.e., $G$-configuration) exactly
$|\Gamma_i|$ transitions are possible, one per edge in $E$.

\subsection{De Groot}
\label{sec.models.degroot}

In the De Groot model, the opinion value of every agent in the network
is updated at each time-step. All influences are considered at the
same time.

The strategy for De Groot in the proposed framework identifies the
whole set of interactions in the network, i.e., $\Gamma_i$.
In particular, the strategy $\rho_\text{DeGroot}$ maps a
$G$-configuration to the singleton whose only element is $\Gamma_i$:
\begin{align*}
  & \rho_\text{DeGroot}(\Gamma)  = \{ \Gamma_i \}.
\end{align*}

The update function $\mu_\text{DeGroot}$ is defined for any $u \in V$
and $A \in \rho_\text{DeGroot}(\Gamma)$ (i.e., $A = \Gamma_i$) as:
\begin{align*}
  & \mu_\text{DeGroot}(\Gamma, A, u)  = o_u + \sum_{\oobj{(v,u)}{i_{vu}} \in A}(o_v-o_u)\cdot \frac{i_{vu}}{\sum_{\oobj{(x,u)}{i_{xu}} \in A}i_{xu}},
\end{align*}
where the summation in the denominator is assumed to be non-zero.
Otherwise, the value of this function is assumed to be $o_u$ (i.e.,
the opinion of agent $u$ does not change).

The De Groot model is identified as a binary set relation on
$G$-configurations in terms of the synchronous closure of
$\REL_{\Gamma_i}$ under the maximal redices strategy for $s =
\Gamma_i$.

\begin{definition}\label{def.models.degroot}
The $\REL_\text{DeGroot}$ set relation is the set of pairs
$\opair{\Gamma}{\Gamma'}$ of $G$-configurations such that:
\begin{align*}
  \Gamma \REL_\text{DeGroot} \Gamma' \quad\text{iff} \quad \Gamma \oclos{\Gamma_i}{\REL}_{\Gamma_i} \Gamma'.
\end{align*}
\end{definition}

From the viewpoint of concurrency, the De Groot opinion dynamic model
captured by $\REL_\text{DeGroot}$ is deterministic in the sense that,
at each state, there is exactly only one possible transition where all
influences are taken into account to update each agent's opinion
without interleaving.

\subsection{The Hybrid Model}
\label{sec.models.hybrid}

The hybrid model considers every possible influence scenario in the
network, i.e., any possible combination of influences are used to
update the opinion of agents that may be affected by them at each
time-step.
Therefore, the strategy in the proposed framework identifies all
non-empty subsets of interactions in $\Gamma_i$. In particular,
the strategy $\rho_\text{hybrid}$ maps a $G$-configuration to the
collection of non-empty subsets made from the influences in $\Gamma_i$:
\begin{align*}
  & \rho_\text{hybrid}(\Gamma)  = \{ A \mid A \subseteq \Gamma_i \mbox{ and } A\neq \emptyset \}.
\end{align*}
This means that, at each time-step, the opinion value of an agent $v$ can
be updated with a subset of its influencers.

The update function $\mu_\text{hybrid}$ is the same as function
$\rho_\text{DeGroot}$. That is, it is defined for any $u \in V$ and $A
\in \rho_\text{hybrid}(\Gamma)$ as:
\begin{align*}
  & \mu_\text{hybrid}(\Gamma, A, u)  = o_u + \sum_{\oobj{(v,u)}{i_{vu}} \in A}(o_v-o_u)\cdot \frac{i_{vu}}{\sum_{\oobj{(x,u)}{i_{xu}} \in A}i_{xu}},
\end{align*}
where the summation in the denominator is assumed to be non-zero.
Otherwise, the value of this function is assumed to be $o_u$ (i.e.,
the opinion of agent $u$ does not change).
Each subset $A \in \rho_\text{hybrid}(\Gamma)$ determines an atomic
relation that may update more that one agent's opinion.
Hence, in this model, the opinion of an agent is updated by
identifying some edges that may have influence over it. 

The hybrid model is identified as a binary set relation on
$G$-configurations in terms of the synchronous closure of $\REL_A$,
for each subset $A\in \rho_\text{hybrid}(\Gamma)$.

\begin{definition}\label{def.models.hybrid}
The $\REL_\text{hybrid}$ set relation is the set of pairs
$\opair{\Gamma}{\Gamma'}$ of $G$-configurations such that:
\begin{align*}
  \Gamma \REL_\text{hybrid} \Gamma' \quad\text{iff} \quad (\exists A
  \in \rho_\text{hybrid}(\Gamma)) \; \Gamma  \oclos{A}{\REL}_{A} \Gamma'.
\end{align*}
\end{definition}

From the viewpoint of concurrency, the hybrid opinion dynamic model
has the maximum degree of non-determinism possible. Moreover, this
model is more general than the De Groot model.

\begin{theorem}\label{thm.models.degroothyb}
  ${\REL_\text{DeGroot}} \subseteq {\REL_\text{hybrid}}$.
\end{theorem}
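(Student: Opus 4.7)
The plan is to prove the inclusion by a direct witnessing argument: every DeGroot transition is exhibited as a hybrid transition by choosing the particular subset of influences $A = \Gamma_i$ in the hybrid strategy. I would take an arbitrary pair $\opair{\Gamma}{\Gamma'} \in {\REL_\text{DeGroot}}$ and show $\opair{\Gamma}{\Gamma'} \in {\REL_\text{hybrid}}$.

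First, I would unfold \Cref{def.models.degroot}: $\Gamma \REL_\text{DeGroot} \Gamma'$ means $\Gamma \oclos{\Gamma_i}{\REL}_{\Gamma_i} \Gamma'$, where $\rho_\text{DeGroot}(\Gamma) = \{\Gamma_i\}$ determines the unique set of redices selected by the synchronous closure and the atomic relation $\REL_{\Gamma_i}$ is built from equation~\eqref{eq.models.atomic} using $\mu_\text{DeGroot}$. Second, I would choose as the hybrid witness $A := \Gamma_i$. Assuming the configuration encodes a nontrivial social system (so that $\Gamma_i \neq \emptyset$), the trivial inclusion $\Gamma_i \subseteq \Gamma_i$ gives $\Gamma_i \in \rho_\text{hybrid}(\Gamma)$ by the definition of that strategy in \Cref{sec.models.hybrid}.

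Third, I would verify that the two atomic relations coincide on this choice of $A$. Comparing the definitions in \Cref{sec.models.degroot} and \Cref{sec.models.hybrid} side by side shows that $\mu_\text{hybrid}(\Gamma, \Gamma_i, u) = \mu_\text{DeGroot}(\Gamma, \Gamma_i, u)$ for every agent $u$, since the two formulas are syntactically identical (and the same convention handles the vacuous-denominator case). Consequently, the atomic set relation $\REL_{\Gamma_i}$ used by DeGroot is exactly the atomic set relation $\REL_A$ used by hybrid when $A = \Gamma_i$. Applying the same synchronous closure operator from \Cref{def.rels.sync} to the same atomic relation under the same maximal-redices strategy $s = \Gamma_i$ therefore yields $\oclos{\Gamma_i}{\REL}_{\Gamma_i} = \oclos{A}{\REL}_A$. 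Hence $\Gamma \oclos{A}{\REL}_A \Gamma'$, and by \Cref{def.models.hybrid} this gives $\Gamma \REL_\text{hybrid} \Gamma'$, concluding the proof.

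There is no deep technical obstacle: the only subtleties are (i) to observe that $\rho_\text{hybrid}(\Gamma)$ contains the full set $\Gamma_i$ among its candidates, and (ii) to justify that the two update functions, although defined in separate subsections, are literally the same expression so the atomic and synchronous layers transport unchanged from one model to the other. The degenerate case $\Gamma_i = \emptyset$ (in which $\rho_\text{hybrid}(\Gamma) = \emptyset$ and no hybrid transition is possible) can either be excluded by assumption on admissible configurations or handled by noting that in that case no DeGroot transition occurs either, so the inclusion is vacuous.
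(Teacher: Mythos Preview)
Your proposal is correct and follows essentially the same approach as the paper: the paper's proof consists precisely of the two observations you identify, namely that $\Gamma_i \in \rho_\text{hybrid}(\Gamma)$ and that $\mu_\text{DeGroot}(\Gamma, \Gamma_i, u) = \mu_\text{hybrid}(\Gamma, \Gamma_i, u)$ for every $u$. Your write-up simply unpacks these two facts in more detail and adds the (harmless) discussion of the degenerate case $\Gamma_i = \emptyset$, which the paper leaves implicit.
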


\begin{proof}
  It follows by noting that $\Gamma_i \in \rho_\text{hybrid}(\Gamma)$
  and, for each vertex $u \in V$, the equality
  $\mu_\text{DeGroot}(\Gamma, \Gamma_i, u) = \mu_\text{hybrid}(\Gamma,
  \Gamma_i, u)$ holds.
\end{proof}

It is not necessarily the case that ${\REL_\text{gossip}} \subseteq
{\REL_\text{hybrid}}$. This is because the update functions do not
always agree when the collection of selected influences $A$ is a
singleton. In particular, for each singleton $A = \{\oobj{(v,
  u)}{i_{vu}}\}$, $\mu_\text{hybrid}(\Gamma, A , u) = o_v$, meaning
that agent $u$ in the hybrid model behaves always like a puppet when
$u \neq v$. Note that this is not (necessarily) the case in
${\REL_\text{gossip}}$.
Nevertheless, there is a class of graphs for which this inclusion
holds.

\begin{theorem}\label{thm.models.gossiphyb}
  If $G$ is such that each vertex has a self-loop and is influenced at
  most by another vertex, and the summation of its incoming influences
  is 1, then ${\REL_\text{gossip}} \subseteq {\REL_\text{hybrid}}$.
\end{theorem}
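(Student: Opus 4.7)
The plan is to show, for every pair $\opair{\Gamma}{\Gamma'} \in {\REL_\text{gossip}}$, that one can exhibit a subset $A' \in \rho_\text{hybrid}(\Gamma)$ witnessing $\Gamma \oclos{A'}{\REL}_{A'} \Gamma'$. By Definition~\ref{def.models.gossip}, $\Gamma \REL_\text{gossip} \Gamma'$ means there is a singleton $A = \{\oobj{(v,u)}{i_{vu}}\} \in \rho_\text{gossip}(\Gamma)$ with $\Gamma \AREL_A \Gamma'$. Because the atomic relation $\REL_A$ of equation~\eqref{eq.models.atomic} only rewrites opinion objects whose agent is the target of an edge in $A$, the unique $\REL_A$-redex available is $\{\oobj{u}{o_u}\}$, and the asynchronous step yields $\Gamma' = (\Gamma \setminus \{\oobj{u}{o_u}\}) \cup \{\oobj{u}{o_u + (o_v-o_u)\cdot i_{vu}}\}$.

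Next, I would split into cases according to whether the selected edge is a self-loop. If $v=u$, simply set $A' = A$; the only edge is the self-loop, the sole vertex targeted is $u$, and $\mu_\text{hybrid}(\Gamma, A', u) = o_u + (o_u-o_u)\cdot\frac{i_{uu}}{i_{uu}} = o_u$ agrees with $\mu_\text{gossip}(\Gamma, A, u) = o_u$, so $\Gamma \oclos{A'}{\REL}_{A'} \Gamma' = \Gamma$. If $v \neq u$, define
\[
  A' \;=\; \{\oobj{(v,u)}{i_{vu}},\, \oobj{(u,u)}{i_{uu}}\},
\]
which belongs to $\rho_\text{hybrid}(\Gamma)$ since the hypothesis guarantees the self-loop $(u,u)$ is in $E$. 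Both edges of $A'$ target $u$, so again the synchronous closure only rewrites the redex $\{\oobj{u}{o_u}\}$.

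The crucial computation is then
\[
  \mu_\text{hybrid}(\Gamma, A', u) \;=\; o_u + (o_v-o_u)\cdot\tfrac{i_{vu}}{i_{vu}+i_{uu}} + (o_u-o_u)\cdot\tfrac{i_{uu}}{i_{vu}+i_{uu}},
\]
and here the hypothesis that $u$ has \emph{at most} one non-self influencer and that its incoming weights sum to $1$ gives $i_{vu} + i_{uu} = 1$. Thus the denominator cancels, the self-loop contribution vanishes, and the expression reduces to $o_u + (o_v - o_u)\cdot i_{vu} = \mu_\text{gossip}(\Gamma, A, u)$. Consequently $\Gamma \oclos{A'}{\REL}_{A'} \Gamma'$ via Definition~\ref{def.rels.sync}, and hence $\Gamma \REL_\text{hybrid} \Gamma'$ by Definition~\ref{def.models.hybrid}.

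The only real obstacle is the mismatch pointed out right before the theorem statement, namely that $\mu_\text{hybrid}$ applied to a singleton edge $\{\oobj{(v,u)}{i_{vu}}\}$ produces the ``puppet'' value $o_v$ rather than the weighted move $o_u + (o_v-o_u)\cdot i_{vu}$. The trick that makes the proof go through is enlarging the hybrid witness $A'$ with the self-loop so that the denominator of the averaging becomes $i_{vu}+i_{uu}$; the assumption that this sum equals $1$ is precisely what restores the correct weighting, while the uniqueness of the non-self incoming edge ensures no stray agent besides $u$ gets updated.
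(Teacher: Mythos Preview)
Your proof is correct and follows essentially the same approach as the paper: both split on whether the selected edge is a self-loop, use $A'=A$ when $v=u$, and in the case $v\neq u$ enlarge the witness to $A'=\{\oobj{(v,u)}{i_{vu}},\oobj{(u,u)}{i_{uu}}\}$ so that the sum-to-$1$ hypothesis forces the denominator of $\mu_\text{hybrid}$ to be $1$. Your write-up is in fact slightly more explicit than the paper's in verifying that only $u$ is targeted by $A'$ and hence only $\oobj{u}{o_u}$ is rewritten in the synchronous step.
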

\begin{proof}
If $\Gamma \REL_\text{gossip} \Gamma'$, there is a singleton $A \in
  \rho_\text{gossip}(\Gamma)$ such that $\Gamma \AREL_A
  \Gamma'$. Let $A = \{\oobj{(v,
    u)}{i_{vu}}\}$. If $u$ has exactly one incoming edge, then $v = u$
  (by the initial assumption) and $\rho_\text{gossip}(\Gamma, A, u) =
  o_u = \rho_\text{hybrid}(\Gamma, A, u)$. Since $A \in
  \rho_\text{hybrid}(\Gamma)$, it follows that $\Gamma
  \REL_\text{hybrid} \Gamma'$. 
    If $u$ has two edges, and the self-loop is taken, the case $v=u$
    is as above. Otherwise, if $u\neq v$, the same transition is obtained in the 
    hybrid model by taking $A' \in
    \rho_\text{hybrid}(\Gamma)$ where $A' = A \cup \{\oobj{(u, u)}{1 - i_{vu}}\}$
    (an noticing that the denominator in $\mu_\text{hybrid}$ becomes $1$). 
\end{proof}
 
\section{The Framework in Rewriting Logic}
\label{sec.rl}

This section presents a rewrite theory that implements the set
relations framework in \Cref{sec.rels}. Off-the-shelf definitions are
provided to instantiate the framework with opinion dynamic models,
such as the ones introduced in \Cref{sec.models}.
This section assumes familiarity with rewriting
logic~\cite{meseguer-rltcs-1992} and
Maude~\cite{clavel-maudebook-2007} (see Section \ref{sec.prem}). 
The full Maude specification supporting the set relations framework
is available at~\cite{tool}.

A rewrite theory $\mathcal{R}$ (using Maude's notation) is defined to
represent networks of agents and their opinions. The atomic relation
(\Cref{eq.models.atomic}) is defined as a non-executable rewrite rule,
and the set relation framework is implemented using the
meta-programming facilities in Maude.
The framework is parametric on an update function ($\mu$) and a
strategy ($\rho$), as explained in \Cref{sec.models}.
The rewrite theory $\mathcal{R}$ must be extended equationally to
instantiate such parameters.

\subsection{Influences, Opinions, and State}

An agent $a$ and its opinion $o_a$, and the influence of agent $a$
over agent $b$ with weight $i_{ab}$, are specified with the help of
the following sorts and function symbols:
\begin{maude}
sorts Agent Opinion Edge .
op  <_:_>   :     Agent       Float -> Opinion [ctor] .
op  <`(_,_`):_> : Agent Agent Float -> Edge    [ctor] .
\end{maude}

\noindent The user is expected to provide appropriate constructors for
the sort \code{Agent}, e.g., by extending $\mathcal{R}$ with the
subsort relation \lstinline{subsort Nat < Agent} to use natural
numbers as identifiers for agents.

Sets of agents, opinions, and edges (sorts \code{SetAgent}, \code{SetOpinion},
and \code{SetEdge} respectively) are defined as  ``\texttt{,}''-separated sets
of elements in the usual way.
A $G$-configuration $\Gamma = \Gamma_o \cup \Gamma_i$ is represented
by a term of sort \code{Network}, defining the set of agents' opinions
($\Gamma_o$) and influences ($\Gamma_i$) with the following sorts and
function symbols:

\begin{maude}
sort Network . 
op < nodes:_ ; edges:_ > : SetOpinion SetEdge -> Network [ctor] .
\end{maude}

Analyzing opinion dynamics usually requires determining the number of
interactions between agents and the time needed to reach a given
state. A term of the form ``$N~\texttt{in step: }t~\texttt{comm:}~
nc$'' of sort \code{State} represents the state of a network $N$ at
the current time-unit $t$, when a number of
interactions/communications $nc$ have taken place:

\begin{maude}
 sort State .
 op _ in step:_ comm:_ : Network Nat Nat -> State [ctor] .
\end{maude}

\subsection{Strategies and the Atomic Relation}

The atomic relation $\REL_A$ is parametric on a non-empty subset $A
\subseteq \Gamma_i$. A strategy identifies each one of such subsets at
each time-step. A \code{SetSetEdge} is a ``\texttt{;}''-separated set
of set of edges.

\begin{maude}
sort SetSetEdge . subsort NeSetEdge < SetSetEdge . 
op mt : -> SetSetEdge [ctor] .
op _;_ : SetSetEdge SetSetEdge -> SetSetEdge [ctor assoc comm id: mt] .
\end{maude}

Some distinguished \code{SetSetEdge}s include the singleton with all
the edges in the network (De Groot model), the set containing only
singletons (Gossip model) and the set of non-empty subsets of edges
(Hybrid model).

\begin{maude}
var SE : SetEdge .   var E : Edge .
op deGroot  : SetEdge     -> SetSetEdge .
eq deGroot(SE) = SE .

op gossip  : SetEdge       -> SetSetEdge .
eq gossip(empty) = mt .
eq gossip((E, SE)) = E ; gossip(SE) .

op hybrid : SetEdge -> SetSetEdge .
eq hybrid(SE) = power-set(SE) \ empty .

op strategy :  -> SetSetEdge . --- user defined strategy 
\end{maude}

\noindent
The operator \code{strategy} must be defined by the user to identify
the subsets $A \subseteq \Gamma_i$ available in each transition. This
can be done, e.g., by adding the equation

\lstinline{eq strategy =  gossip(edges) .}

\noindent
where \code{edges} is the set of edges in the network currently being
modeled.

The atomic rewrite relation is captured by a non-executable rewrite
rule that updates the belief of a given \code{AGENT} ($u$ in
\Cref{eq.models.atomic}) when a set of \code{EDGES} ($A$) is selected
and the current state of the system is \code{STATE} ($\Gamma$):

\begin{maude}
var AGENT : Agent . vars BELIEF BELIEF' : Float . var STATE : State . 
vars SETEDGE EDGES : SetEdge .

op update : State SetEdge Agent -> Float .  --- user defined $\mu$

crl [atomic] : < AGENT : BELIEF >  =>  < AGENT :  BELIEF' >
    if BELIEF' := update(STATE, SETEDGE, AGENT) [nonexec] .
\end{maude}

The function \code{update} ($\mu$ in \Cref{eq.models.atomic}) must be
specified by the user. The framework provides instances of this
function for the models presented in \Cref{sec.models}.

An asynchronous, parallel, or synchronous rewrite step, depending on
the underlying strategy, is captured by the rewrite rule \code{step}
below:

\begin{maude}
var SETNODE : SetNode . vars STEPS COMM : Nat . 
op moduleName : -> Qid . --- Name of the module with the user's network 

crl [step] : STATE => STATE' 
    if  EDGES ; SSE := strategy /\
        STATE'      := step([moduleName], STATE, EDGES) .
\end{maude}

In this rule, the current \code{STATE} is updated to \code{STATE'} by
non-deterministically selecting a set of \code{EDGES} from the set of
set of edges available according to the \code{strategy}.  The function
\code{step} below takes as parameters the meta-representation of the
user's module defining the network (\code{moduleName}), the current
state, and the selected set of edges.

\begin{maude}
var SETAG : SetAgent . var SETOP : SetOpinion . var OP : Opinion .
op step : Module State SetAgent SetOpinion SetEdge -> State .
op step : Module State                     SetEdge -> State .
eq step(M, STATE, EDGES) = step(M, STATE, $\highlight{incidents}$(EDGES), $\highlight{empty}$, EDGES) .
eq step(M, STATE, $\highlight{empty}$, $\highlight{SETOP}$, EDGES) =  
   < nodes: ($\highlight{nodes(STATE) / SETOP}$) ; edges: edges(STATE) > 
   in step: (steps(STATE) $\highlight{+ 1}$) comm: (comm(STATE) $\highlight{+ | non-self(EDGES) |}$) .
eq step(M, STATE, $\highlight{(AGENT, SETAG)}$, SETOP, EDGES) = 
   step(M, STATE, SETAG, (SETOP, $\highlight{next(M, AGENT, EDGES, STATE)}$), EDGES) .

\end{maude} 

\noindent
The  function \code{step} recursively computes the beliefs of the
agents \code{incident} to \code{EDGES}. The updated beliefs are
accumulated in the set of opinions \code{SETOP}. The opinions of the
other agents remain as in \code{STATE} (operator \code{/}), and the
number of steps and the number of communications are updated
accordingly. The expression \code{| non-self(.) |} returns the number
of edges that are not self-loops, and 
\code{nodes(.)} returns the opinions ($\Gamma_o$) in a state.

The  function \code{next} computes the outcome of the transition
$\oobj{u}{o_u} \REL_A \oobj{u}{o_u'}$ by applying (\code{metaApply})
the rule \code{atomic} with the needed substitutions to make this rule
executable (and deterministic). Namely, it fixes the opinion to be
updated (\code{AGENT} and \code{BELIEF}), the current \code{STATE} and
the set of \code{EDGES} to be considered during the update.

\begin{maude}
op next : Module Agent SetEdge State -> Opinion . 
ceq  next(M, AGENT, EDGES, STATE) = OP 
 if  SUBS := 'AGENT:Agent   <- upTerm(AGENT) ; 
             'BELIEF:Float  <- upTerm(opinion(AGENT, STATE)) ; 
             'STATE:State   <- upTerm(STATE) ; 
             'EDGES:SetEdge <- upTerm(EDGES) /\ 
     RES? := metaApply(M, upTerm(< AGENT :  opinion(AGENT, STATE) >), 
             $\highlight{'atomic, SUBS, 0}$) /\
     OP   := if RES? == failure then error 
             else downTerm(getTerm(RES?), error) fi .
\end{maude}

The \code{opinion} function returns the opinion of an agent in a given
state.
 
\section{Experimentation}
\label{sec.exp}

This section shows how Maude and some of its tools can be used to
analyze instantiated versions of the rewrite theory $\mathcal{R}$ (see
\Cref{sec.rl}) to better understand the evolution of opinions in
networks of agents.
Of special interest is checking the (im)possibility of reaching a
consensus (i.e., agent's opinions converge to a given value) or
stability of the systems, computing the number of steps to reach
consensus, computing an optimal strategy to reach consensus, measuring
the polarization of the system at each time-step, among others.
It is noticed that for De Groot and Gossip-like models, there are
theoretical results identifying topological conditions that guarantee
consensus. In particular, in these models, the agents reach consensus
if the graph is strongly connected and aperiodic (i.e., the greatest
common divisor of the lengths of its cycles is one)~\cite{Golup2017}.

\subsection{Finding Consensus}

Let \code{Example-DG} be the module/theory extending $\mathcal{R}$
with the following operators and equations:

\begin{maude}
op init : -> Network .                 --- Initial state (as in Fig 1)
eq init = < nodes: ... ; edges: ... > in step: 0 comm: 0 .
eq moduleName = 'Example-DG .          --- Name of the theory
--- Predefined $\mu$ for De Groot
eq update(STATE, SETEDGE, AGENT)  = deGrootUpdate(STATE, SETEDGE, AGENT) .        
eq strategy   = deGroot(edges(init)) . --- De Groot strategy
\end{maude}

The following command answers the question of whether it is possible
to reach a consensus from the \code{init}ial state. (Function
\code{consensus(.)} checks if all opinions $o_i$ and $o_j$ in a given
state satisfy $|o_i - o_j| < \epsilon$, where $\epsilon$ is an error
bound).

\begin{maude}
Maude> search [1]  init =>* STATE such that consensus(STATE) .

Solution 1 (state 34)
STATE --> < nodes: < 0 : 4.80e-1 >, < 1 : 4.79e-1 >, < 2 : 4.79e-1 >, ...
            edges: <(0,1): 5.99e-1 >, <(0,2): 4.00e-1 >, ... >
            in step: 34 comm: 272 
\end{maude}

The consensus about the given proposition is approximately $0.48$ and
it is reached in 34 steps. 
Since in the De Groot model all the 12 edges
are considered in each interaction, there is a total of 272 = $34\times 8$ 
communications (the interactions on the self-loops are not considered in 
that counting). 
 Note that an application of rule \code{step} in this
case is completely deterministic (the strategy considers only one
possible outcome, including all the edges of the network).

Let \code{Example-H} be as \code{Example-DG}, but considering the
strategy and update functions for the hybrid model.  As explained in
\Cref{sec.models.hybrid}, the Hybrid model exhibits the maximum degree
of non-determinism.  Using \lstinline{search} to check the existence
of a reachable state satisfying consensus for the system in
Figure~\ref{fig.rels.vaccines} (12 edges) becomes unfeasible: a state
may have up to 4095 (non-empty subsets of $\Gamma_i$) successor
states. Certainly, for this network, a solution must exist due to
the above output of the \lstinline{search} command and the fact that
${\REL_\text{DeGroot}} \subseteq {\REL_\text{hybrid}}$.

Consider the following rewrite rule and expression in Maude's strategy language:

\begin{maude}
crl [step'] :   STATE  => STATE'
if  STATE'  := step([moduleName], STATE, $\highlight{EDGES}$) [$\highlight{nonexec}$] .

var STR : SetSetEdge .
strat round :  SetSetEdge @ State . 
sd round(EDGES ; STR) := (match STATE s.t. $\highlight{consensus}$(STATE))
                          or-else $\highlight{step'[EDGES <- EDGES]}$ ; $\highlight{round(STR)}$ .
\end{maude}

\noindent
Unlike \code{step}, rule \code{step'} does not use the model
strategy to select the set of \code{EDGES} that will be used to
compute the next state (and hence, it is non executable).
The Maude's strategy \code{round}  checks whether the current
state satisfies consensus and stops. Otherwise, it
non-deterministically chooses a set \code{EDGES}, applies the rule
\code{step'} instantiating the set of edges with that particular set,
and it is recursively called without \code{EDGES}. In other words,
\code{round} starts with a set of possible interactions and it allows
for these interactions to happen only once. This is certainly one of
the possible behaviors that can be observed with the Hybrid
model.
Using this strategy, it is possible to find some states that satisfy
consensus  and answer the question whether by selecting some
groups of agents (non necessarily disjoint) that, interacting only
once, may lead to a consensus.  (Function \code{filter>=(n,STR)}
returns the sets in \code{STR} with cardinality at least $n$). 

\begin{maude}
Maude> dsrew [1] init using round(hybrid(edges)) .
Solution 1
result State: < nodes: < 0 : 0.0 >, ... edges: ... > in step: 8 comm: 13 .

Maude> dsrew [1] init using round(filter>=(6, hybrid(edges))) .
Solution 1
result State: < nodes: < 0 : 1.50e-1 >, ... > in step: 21 comm: 88 .
\end{maude}

As expected, because of the non-deterministic nature of the Hybrid model, the
value of consensus (and the number of steps to reach such a state) can heavily
depend on the choice of edges at each step. In the first output returned by
\lstinline{dsrew} in the first command, 
all the sets considered by 
\code{round} included edges 
where $a$ acts as an influencer and the edge $f\to a$ is never selected. This
explains the value of the consensus, where the opinion of $a$ was propagated to
her neighbors. In the send command, larger groups are chosen to interact, and
the edge $f\to a$ is selected in  4 out of the 21 interactions. Hence, $a$
eventually changes her opinion.

\subsection{Statistical Analysis}

An alternative approach to deal with the inherent state explosion
problem when analyzing $\mathcal{R}$ is to perform statistical model
checking.
In the following, the tool
\code{umaudemc}~\cite{DBLP:journals/jlap/RubioMPV21} is used for such
a purpose.  The \code{umaudemc} command \code{scheck} enables
Monte-Carlo simulations of a rewrite theory extended with
probabilities; it estimates the value of a quantitative temporal
expression written in the query language
QuaTEx~\cite{agha-pmaude-2006}.

Consider the following QuaTEx expression that computes the probability
of reaching a consensus before $N$ communications:
\begin{maude}
Prob-C(N) = if (s.rval("consensus(S)")) then 1.0 else 
             if (s.rval("comm(S)") <= N) then # Prob-C(N) else 0.0 fi fi;
\end{maude}

The two  commands below estimate the probability of reaching consensus
before 30 (QuaTEx formula \code{E[Prob(30)]})
and 20 communications, respectively, in the running example  when 
the gossip-based model is considered.
 The confidence level
of these analyses is $95\%$ and the same probability is assigned to
every successor state (\code{uniform}).

\begin{maude}
umaudemc  scheck ex-gossip init formula -a 0.05 -d 0.01 --assign uniform 
  ($\mu$ = 0.587)
umaudemc scheck ex-gossip init formula -a 0.05  -d 0.01 --assign uniform 
  ($\mu$ = 0.348)
\end{maude}

As expected, reducing the maximum number of communications decreases the changes
of reaching a consensus state. 

The authors in \cite{Bramson2017} hypothesize that the less dispersed opinion becomes, the easier it will be to reach consensus. In fact the variance, a standard
measure of dispersion, is  used as a  measure of opinion polarization in social
networks \cite{Bramson2017}. The following commands aim at testing such a 
hypothesis in the running example when considering the hybrid model: 

\begin{maude}
umaudemc  scheck example-H init ... --assign uniform 
  ($\mu$ = 0.901)
umaudemc scheck example-H init  ... --assign "term(variance(L,R))"
  ($\mu$ = 1.0)
umaudemc scheck example-H init ...  --assign "term(distance(L,R))"
  ($\mu$ = 0.987)
\end{maude}

These commands estimate 
the probability of reaching consensus before 300 communications (\code{E[Prob(300)]}). 
In the first case, all the successor states are assigned the same
probability. In the second, successor states whose set of chosen 
agents has  higher \code{variance} 
are assigned higher probabilities. In the third command, 
successor states whose set of chosen agents are more \emph{polarized}, 
in the sense that the \code{distance} between the maximal and the minimal 
opinions is bigger, are assigned higher probabilities. 
These results confirm the hypothesis that
 it is more likely (1.0 vs 0.9)  to reach consensus sooner when 
communications of agents with more distant opinions is encouraged to reduce 
dispersion of opinions.

\section{Concluding Remarks}
\label{sec.concl}

This paper presented a unified framework for dynamic opinion models. Such
 models are tools to analyze the evolution
of opinion values, about a given topic, in a network of agents whose
opinion may be influenced by other agents.
Set relations, which are used for specifying and analyzing
concurrent behavior in collections of agents, are the formalism used
to unify the modeling of these systems.
This framework relies on two mechanisms,
namely, an atomic relation that updates the opinion of single
agents based on a collection of interactions and a strategy defining
the collections of interactions to be considered.
The framework is formally specified as a rewrite theory, which is
expected to be instantiated for the opinion dynamic model of interest.
Three different dynamic opinion models (De Groot, goossip-like, and
hybrid) are shown to be instances of this framework.
Experiments on these models  show that statistical model checking is
a promising alternative to tackle the state explosion problem when
analyzing models with a high degree of non-determinism, such is the
case of the hybrid model.
To the best of the authors' knowledge, this is the first documented
effort to make available concurrency theory, techniques, and tools for
the specification and analysis of opinion dynamics models and
properties such as polarization and consensus.

The ultimate goal of making available computational ideas and
approaches for analyzing phenomena in social networks requires
(significant) additional work. 
First, a more in-depth exploration of properties related to
these phenomena in social networks is required. This may lead to the
proposal of new temporal and probabilistic properties that cannot be
handled with current techniques and approaches supporting the opinion
dynamic modeling community, but that may be highly supported by the
developments in concurrency and computational logics.
Second, extensions to the current framework in terms of more general
dynamic networks (i.e., the value of influences can change), temporal
networks (i.e., nodes and edges can appear and disappear), and the
inclusion of several topics/propositions that may share causal relations are in
order.
Third, more experimental validation is required, ideally with data
gathered from real social networks.
Fourth, building on the abstract relations proposed here, 
techniques from concurrency theory become available for 
the analysis of social systems. It is worth exploring standard concurrency techniques such as bisimulation and testing equivalences to answer questions such as whether two social systems ought to be equivalent and whether there is a social context, represented as a social system, that can tell the  difference between two other social systems.

\bibliographystyle{splncs04}

\end{document}